\documentclass[sn-mathphys-num]{sn-jnl}

\usepackage{mathtools,amssymb,bm}
\usepackage{microtype}
\newtheorem{theorem}{Theorem}[section]
\newtheorem{proposition}[theorem]{Proposition}
\newtheorem{lemma}[theorem]{Lemma}
\newtheorem{corollary}[theorem]{Corollary}
\theoremstyle{definition}
\newtheorem{assumption}[theorem]{Assumption}
\theoremstyle{remark}

\newcommand{\R}{\mathbb R}
\newcommand{\cH}{\mathcal H}
\newcommand{\cK}{\mathcal K}
\newcommand{\cS}{\mathcal S}
\newcommand{\cV}{\mathcal V}
\newcommand{\cW}{\mathcal W}
\newcommand{\ind}{\iota}
\newcommand{\prox}{\operatorname{prox}}
\newcommand{\ran}{\operatorname{ran}}
\newcommand{\ip}[2]{\left\langle #1,#2\right\rangle}
\newcommand{\norm}[1]{\left\lVert #1\right\rVert}
\newcommand{\normH}[1]{\left\lVert #1\right\rVert_H}
\newcommand{\avg}{\operatorname{avg}}
\newcommand{\exch}{\operatorname{exch}}

\begin{document}

\title[A Prediction--Correction Analysis of Two-Way Block Splitting in Distributed Learning]{A Prediction--Correction Analysis of Two-Way Block Splitting in Distributed Learning}

\author*[1,2]{\fnm{Xiaofei} \sur{Wu}}\email{xfwu1016@ynu.edu.cn}

\affil*[1]{\orgdiv{School of Mathematics and Statistics},
\orgname{Yunnan University},
\orgaddress{\city{Kunming}, \state{Yunnan}, \postcode{650500},
\country{China}}}

\affil[2]{\orgname{Yunnan Key Laboratory of Statistical Modeling and Data Analysis},
\orgaddress{Yunnan University, \city{Kunming}, \state{Yunnan},
\postcode{650500}, \country{China}}}

\abstract{
This note revisits the convergence of the Parikh--Boyd two-way block-splitting
algorithm for large-scale distributed learning through the He--Yuan
prediction--correction framework. Simultaneous row--column partitioning is also
relevant to hybrid federated learning, where data may be heterogeneous in both
samples and features. We lift the reduced iteration to an equal-dimensional
product space and reconstruct the primal and dual coordinates omitted by its
implementation. The induced orthogonal-complement structure establishes exact
iteration-by-iteration equivalence with the published updates. A mixed
variational-inequality representation then yields a fundamental descent
inequality, global convergence, an ergodic complexity bound, and
current-iterate residual estimates under standard convexity, solvability,
exact-subproblem, and invariant-initialization assumptions. The analysis also
shows that the reduced state recursion is a metric proximal point iteration.
No strong convexity, differentiability, or full-rank condition is imposed. The
derivation clarifies which algebraic initialization conditions allow the
reduced implementation to inherit the full-space convergence and complexity
guarantees without modifying its local updates.}

\keywords{
block splitting; distributed learning; hybrid federated learning;
prediction--correction method; alternating direction method of multipliers;
convergence rate}

\maketitle

\section{Introduction}

\subsection{Distributed block-splitting model}

Large-scale learning may require a data matrix to be partitioned by both
observations and features. This creates column-consensus constraints between
feature copies and row-exchange constraints between partial fitted values.
Parikh and Boyd enforce them by local proximal and graph projections followed
by averaging and exchange projections \citep{ParikhBoyd2010,ParikhBoyd2014}.
The same two-way organization arises in hybrid federated learning when parties
differ in sample coverage and feature spaces; recent model-matching and
security studies illustrate this connection
\citep{ZhangEtAl2024Hybrid,YuEtAl2025Hybrid}. Their learning architectures need
not use the convex algorithm studied here, but they motivate simultaneous
row--column partitioning.

Let $A\in\R^{m\times n}$ be partitioned into $M\times N$ blocks
$A_{ij}\in\R^{m_i\times n_j}$. Parikh and Boyd consider
\citep{ParikhBoyd2010,ParikhBoyd2014}
\begin{equation}
 \min_{x,y}\quad l(y)+r(x)
 \qquad\text{subject to}\qquad y=Ax,
 \label{eq:original}
\end{equation}
where
\[
 l(y)=\sum_{i=1}^M l_i(y_i),
 \qquad
 r(x)=\sum_{j=1}^N r_j(x_j).
\]
The row and column partitions are conformable with $A$.
Here $y_i\in\R^{m_i}$ is the $i$th row block of the fitted value, $x_j\in
\R^{n_j}$ is the $j$th feature block, and $A_{ij}$ maps the latter into the
former. Introduce a local copy $x_{ij}$ of $x_j$ and a local contribution
$y_{ij}=A_{ij}x_{ij}$. With
\[
 C_{ij}:=\{(y_{ij},x_{ij}):y_{ij}=A_{ij}x_{ij}\}
\]
and the indicator $\ind_{C_{ij}}$, problem \eqref{eq:original} is equivalently
written as
\begin{equation}
 \begin{aligned}
 \min_{\{y_i,x_j,y_{ij},x_{ij}\}}\quad
 &\sum_{i=1}^M l_i(y_i)+\sum_{j=1}^N r_j(x_j)
  +\sum_{i=1}^M\sum_{j=1}^N\ind_{C_{ij}}(y_{ij},x_{ij})\\
 \text{subject to}\quad
 &x_{ij}=x_j\quad(i=1,\ldots,M;\ j=1,\ldots,N),\\
 &y_i=\sum_{j=1}^N y_{ij}\quad(i=1,\ldots,M).
 \end{aligned}
 \label{eq:PB-lifted}
\end{equation}
The first constraint is column consensus, while the second is the row
exchange constraint. The indicator incorporates each local linear relation
without communicating the block $A_{ij}$.
We use the convention
\[
 \prox_{\psi}(q):=\arg\min_z\{\psi(z)+\tfrac12\norm{z-q}^2\}.
\]

\subsection{The block-splitting iteration }

For a penalty parameter $\rho>0$, one iteration \citep{ParikhBoyd2010} consists of the following
eight updates:
\begin{align}
 y_i^{k+1/2}
 &=\prox_{l_i/\rho}(y_i^k-\widetilde y_i^k),
 &&\label{eq:PB3}\\
 x_j^{k+1/2}
 &=\prox_{r_j/\rho}(x_j^k-\widetilde x_j^k),
 &&\label{eq:PB4}\\
 (y_{ij}^{k+1/2},x_{ij}^{k+1/2})
 &=\Pi_{ij}(y_{ij}^k+\widetilde y_i^k,
                  x_j^k-\widetilde x_{ij}^k),
 &&\label{eq:PB5}\\
 x_j^{k+1}
 &=\avg(x_j^{k+1/2},\{x_{ij}^{k+1/2}\}_{i=1}^M),
 &&\label{eq:PB6}\\
 (y_i^{k+1},\{y_{ij}^{k+1}\}_{j=1}^N)
 &=\exch(y_i^{k+1/2},\{y_{ij}^{k+1/2}\}_{j=1}^N),
 &&\label{eq:PB7}\\
 \widetilde x_j^{k+1}
 &=\widetilde x_j^k+x_j^{k+1/2}-x_j^{k+1},
 &&\label{eq:PB8}\\
 \widetilde y_i^{k+1}
 &=\widetilde y_i^k+y_i^{k+1/2}-y_i^{k+1},
 &&\label{eq:PB9}\\
 \widetilde x_{ij}^{k+1}
 &=\widetilde x_{ij}^k+x_{ij}^{k+1/2}-x_j^{k+1}.
 &&\label{eq:PB10}
\end{align}
Here $k+1/2$ denotes a local prediction before consistency projection.
Equations (3) and (4) apply the proximal maps of the row loss $l_i$ and column
regularizer $r_j$, while (5) uses the Euclidean graph projection
$\Pi_{ij}=P_{C_{ij}}$, so that
$y_{ij}^{k+1/2}=A_{ij}x_{ij}^{k+1/2}$. The average in (6) assigns the mean of
the global prediction and its $M$ copies to every copy, and the exchange map
in (7) projects onto $y_i=\sum_jy_{ij}$; their formulas appear in
\eqref{eq:avg}--\eqref{eq:exch}. Finally, (8)--(10) update the retained scaled
dual coordinates $\widetilde x_j^k$, $\widetilde y_i^k$, and
$\widetilde x_{ij}^k$. The omitted coordinate
$\widetilde y_{ij}^k=-\widetilde y_i^k$ is reconstructed in
Section~\ref{sec:dual-structure}. Thus (3)--(5) are independent local
predictions, (6)--(7) enforce consistency, and (8)--(10) update the reduced
dual state.

\subsection{Scope, contributions, and assumptions}

This note does not modify (3)--(10). It restores equal-dimensional primal and
dual variables, characterizes the orthogonal complement of the
consensus--exchange subspace, and proves exact iteration-by-iteration recovery
of the reduced algorithm. The resulting mixed variational inequality provides
the prediction inequality, metric-compatible correction, Fej\'er decrease,
global convergence, an ergodic gap bound, and current-iterate residual rates
within the He--Yuan framework
\citep{HeYuan2012Contraction,HeYuan2023}. The distinction between squared
residual and residual-norm rates is consistent with related ADMM analyses
\citep{DengLaiPengYin2017,HeYuan2012Rate,HeYuan2015,ZamaniEtAl2024}.

Parikh and Boyd already identify graph-projection splitting as an
ADMM/Douglas--Rachford method, state convergence, and derive eliminated-dual
relations \citep{ParikhBoyd2014}. Our contribution is therefore not a first
convergence guarantee, but an explicit invariant-space derivation that verifies
the prediction--correction conditions and yields the stated rate bounds. For
background on distributed ADMM and splitting--contraction methods, see
\citep{BoydEtAl2011,HeTaoYuan2012}.

\section{Full product-space lifting and primal equivalence}
\label{sec:full-lifting}

\subsection{The lifted problem}

Retain the graph subspaces $C_{ij}$ from Section~1 and write
$\Pi_{ij}:=P_{C_{ij}}$. Introduce $x_{ij}\in\R^{n_j}$ and
$y_{ij}\in\R^{m_i}$ and work in the Euclidean product space
\begin{equation}
 \cH:=
 \left(\prod_{i=1}^M\R^{m_i}\right)
 \times\left(\prod_{j=1}^N\R^{n_j}\right)
 \times\left(\prod_{i=1}^M\prod_{j=1}^N\R^{m_i}\right)
 \times\left(\prod_{i=1}^M\prod_{j=1}^N\R^{n_j}\right),
 \label{eq:H}
\end{equation}
with its standard inner product, where
$z=(\{y_i\},\{x_j\},\{y_{ij}\},\{x_{ij}\})\in\cH$. Set
\begin{equation}
 \Phi(z):=
 \sum_{i=1}^M l_i(y_i)+\sum_{j=1}^N r_j(x_j)
 +\sum_{i=1}^M\sum_{j=1}^N\ind_{C_{ij}}(y_{ij},x_{ij}).
 \label{eq:Phi}
\end{equation}
The remaining consistency constraints form the closed linear subspace
\begin{equation}
 \cS:=\left\{z\in\cH:
 x_{ij}=x_j\ \ (i,j),\quad
 y_i=\sum_{j=1}^N y_{ij}\ \ (i=1,\ldots,M)
 \right\}.
 \label{eq:S}
\end{equation}
Thus the lifted formulation is
\begin{equation}
 \min_{z\in\cS}\Phi(z).
 \label{eq:lifted}
\end{equation}

\begin{proposition}[Equivalence of the primal models]\label{prop:primal-equivalence}
Problem \eqref{eq:lifted} is equivalent to \eqref{eq:original}. More
precisely, every feasible point of \eqref{eq:lifted} with finite objective satisfies
$y_i=\sum_j A_{ij}x_j$, and every feasible point of
\eqref{eq:original} with finite objective has the lift $x_{ij}=x_j$ and
$y_{ij}=A_{ij}x_j$, with the same objective value.
\end{proposition}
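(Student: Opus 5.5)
The argument is a direct two-way verification: we show that feasibility with finite objective in one problem produces feasibility with the same objective value in the other. This suffices for equivalence because the optimal values then coincide and minimizers correspond through the two maps.

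\emph{From \eqref{eq:lifted} to \eqref{eq:original}.} Let $z\in\cS$ with $\Phi(z)<\infty$. Since each $l_i,r_j$ takes values in $(-\infty,+\infty]$, finiteness of $\Phi(z)$ forces every indicator term to vanish. Hence $(y_{ij},x_{ij})\in C_{ij}$, that is, $y_{ij}=A_{ij}x_{ij}$ for all $i,j$. Membership in $\cS$ gives $x_{ij}=x_j$, so $y_{ij}=A_{ij}x_j$, and the exchange constraint gives
\[
y_i=\sum_{j=1}^N y_{ij}=\sum_{j=1}^N A_{ij}x_j .
\]
Stacking over $i$ yields $y=Ax$. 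With the indicators equal to zero, $\Phi(z)=l(y)+r(x)$, so the projection $(\{y_i\},\{x_j\})$ is feasible for \eqref{eq:original} with the same objective value.

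\emph{From \eqref{eq:original} to \eqref{eq:lifted}.} Let $(y,x)$ satisfy $y=Ax$ with $l(y)+r(x)<\infty$. Define the lift by $x_{ij}:=x_j$ and $y_{ij}:=A_{ij}x_j$. Then $(y_{ij},x_{ij})\in C_{ij}$, so every indicator is zero. Consensus holds by construction, and the exchange constraint holds because $\sum_j y_{ij}=\sum_j A_{ij}x_j=y_i$ is exactly the $i$th block row of $y=Ax$. Thus the lift lies in $\cS$ and $\Phi(\text{lift})=l(y)+r(x)$.

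\emph{Conclusion.} The two maps preserve objective values on the sets of feasible points with finite objective, and the projection composed with the lift is the identity on $(y,x)$. Therefore the infima agree: if either infimum is $+\infty$, the other is as well. Moreover, minimizers of one problem correspond to minimizers of the other under these maps, which is the asserted equivalence.

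No step is a genuine obstacle. The only point that needs care is to state explicitly that the functions are proper and take values in $(-\infty,+\infty]$, so that a finite objective forces every indicator term to be zero.
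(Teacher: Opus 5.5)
Your proposal is correct and follows the same route as the paper's proof: the finite objective forces the indicators to vanish, so $y_{ij}=A_{ij}x_{ij}$; consensus and exchange then give $y_i=\sum_j A_{ij}x_j$; and the explicit lift $x_{ij}=x_j$, $y_{ij}=A_{ij}x_j$ gives the converse with the same objective value. You add the explicit remarks that properness (values in $(-\infty,+\infty]$) is what makes the indicators vanish, and that the infima and minimizers therefore correspond; the paper leaves both implicit.
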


\begin{proof}
The indicators and \eqref{eq:S} give
$y_{ij}=A_{ij}x_{ij}$, $x_{ij}=x_j$, and $y_i=\sum_jy_{ij}$, hence
$y_i=\sum_jA_{ij}x_j$. The stated lift proves the converse and preserves the
objective.
\end{proof}

For the analysis, duplicate the entire product-space variable:
\begin{equation}
 \min_{\xi,\eta\in\cH}\quad \Phi(\xi)+\ind_{\cS}(\eta)
 \qquad\text{subject to}\qquad \xi-\eta=0.
 \label{eq:two-copy}
\end{equation}
The two copies and the dual variable contain all four coordinate families in
\eqref{eq:H}; the implementation eliminates invariant coordinates.

With $u^k$ denoting the scaled multiplier, two-block ADMM for
\eqref{eq:two-copy} reads
\begin{subequations}\label{eq:full-admm}
\begin{align}
 \xi^{k+1}
 &=\prox_{\Phi/\rho}(\eta^k-u^k),
 \label{eq:full-prox}\\
 \eta^{k+1}
 &=P_{\cS}(\xi^{k+1}+u^k),
 \label{eq:full-proj}\\
 u^{k+1}
 &=u^k+\xi^{k+1}-\eta^{k+1}.
 \label{eq:full-dual}
\end{align}
\end{subequations}

The analytical variable $\xi^{k+1}$ is the full-space collection of the
reduced half-step quantities in (3)--(5), not an additional update; the
integer index follows standard ADMM notation, while $k+1/2$ distinguishes the
local predictions from post-projection variables
\citep{ParikhBoyd2010,ParikhBoyd2014}.

\begin{assumption}\label{ass:main}
Each $l_i$ and $r_j$ is proper, lower semicontinuous, and convex. The lifted
problem \eqref{eq:lifted} admits a KKT point: there exist
$z^\star\in\cS$ and $\lambda^\star\in\cS^\perp$ such that
\begin{equation}
 -\lambda^\star\in\partial\Phi(z^\star).
 \label{eq:kkt-assumption}
\end{equation}
The penalty parameter satisfies $\rho>0$, all proximal and projection
subproblems are solved exactly, and the full-space iteration is initialized
with $\eta^0\in\cS$ and $u^0\in\cS^\perp$.
\end{assumption}

This standard saddle-point assumption requires no strong convexity,
differentiability, or full-rank condition \citep{BoydEtAl2011}. Zero
initialization satisfies the invariant-space requirement. With arbitrary
primal starts, the analysis may begin after the first projection; the initial
column-dual zero-sum condition remains necessary for the simplified first
average to equal the full projection.

\section{Orthogonal-complement structure and exact reduction}
\label{sec:dual-structure}

\subsection{The orthogonal complement}

Let
\[
 \cK:=
 \left(\prod_{i=1}^M\prod_{j=1}^N\R^{n_j}\right)
 \times\left(\prod_{i=1}^M\R^{m_i}\right)
\]
with the standard product inner product. Define the constraint-residual
operator $R$ and display its adjoint as follows:
\begin{equation}
 \begin{aligned}
 R:\cH&\longrightarrow\cK,\\
 Rz&:=\left(
 \{x_{ij}-x_j\}_{i,j},
 \left\{y_i-\sum_{j=1}^N y_{ij}\right\}_{i}
 \right),\\[1mm]
 R^*(\alpha,\gamma)
 &:=\left(
 \{\gamma_i\}_{i},
 \left\{-\sum_{i=1}^M\alpha_{ij}\right\}_{j},
 \{-\gamma_i\}_{i,j},
 \{\alpha_{ij}\}_{i,j}
 \right).
 \end{aligned}
 \label{eq:R}
\end{equation}
The components of $R^*(\alpha,\gamma)$ follow directly by expanding
$\ip{Rz}{(\alpha,\gamma)}$ in the coordinate order of \eqref{eq:H}. Since
$Rz=0$ is exactly \eqref{eq:S}, $\cS=\ker R$ and, in finite dimensions,
\[
 \cS^\perp=(\ker R)^\perp=\ran R^*.
\]
Here $\cS^\perp$ denotes the orthogonal complement.

For $q\in\cH$, $[q]_{x_{ij}}$ denotes its component in the slot labelled by
$x_{ij}$; the notation for the other labelled slots is analogous.

\begin{lemma}[Coordinate description of $\cS^\perp$]\label{lem:Sperp}
A vector $q\in\cH$ belongs to $\cS^\perp$ if and only if there exist
vectors $\alpha_{ij}\in\R^{n_j}$ and $\gamma_i\in\R^{m_i}$ such that
\begin{equation}
 [q]_{x_{ij}}=\alpha_{ij},\qquad
 [q]_{x_j}=-\sum_{i=1}^M\alpha_{ij},
 \qquad
 [q]_{y_i}=\gamma_i,\qquad
 [q]_{y_{ij}}=-\gamma_i.
 \label{eq:Sperp-coordinates}
\end{equation}
Equivalently,
\begin{equation}
 [q]_{y_{ij}}=-[q]_{y_i},
 \qquad
 [q]_{x_j}+\sum_{i=1}^M [q]_{x_{ij}}=0.
 \label{eq:Sperp-identities}
\end{equation}
\end{lemma}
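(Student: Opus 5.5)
The plan is to use the identity $\cS^\perp=\ran R^*$ established just before the lemma, together with the explicit formula for $R^*$ in \eqref{eq:R}. Since $\cH$ is finite-dimensional and $\cS=\ker R$, a vector $q$ lies in $\cS^\perp$ exactly when $q=R^*(\alpha,\gamma)$ for some $(\alpha,\gamma)\in\cK$. Reading off the four coordinate families of $R^*(\alpha,\gamma)$ in the order of \eqref{eq:H} gives \eqref{eq:Sperp-coordinates} directly. For safety I will also check the adjoint formula by expanding $\ip{Rz}{(\alpha,\gamma)}=\sum_{i,j}\ip{x_{ij}-x_j}{\alpha_{ij}}+\sum_i\ip{y_i-\sum_jy_{ij}}{\gamma_i}$ and regrouping by the variables $y_i,x_j,y_{ij},x_{ij}$. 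This is the only place a sign error could creep in, so it is the step I would verify carefully.

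For the equivalence with \eqref{eq:Sperp-identities} there are two directions. First, if $q$ has the form \eqref{eq:Sperp-coordinates}, then $[q]_{y_{ij}}=-\gamma_i=-[q]_{y_i}$, and $[q]_{x_j}+\sum_i[q]_{x_{ij}}=-\sum_i\alpha_{ij}+\sum_i\alpha_{ij}=0$. Conversely, given $q$ satisfying \eqref{eq:Sperp-identities}, I will define $\alpha_{ij}:=[q]_{x_{ij}}$ and $\gamma_i:=[q]_{y_i}$. The identities then reproduce all four relations in \eqref{eq:Sperp-coordinates}.

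As an independent check that avoids relying on $\ran R^*$, one can also test $\ip{q}{z}=0$ for all $z\in\cS$. I would parametrize $\cS$ by free variables $x_j$ and $y_{ij}$, setting $x_{ij}=x_j$ and $y_i=\sum_jy_{ij}$. Then
\[
\ip{q}{z}=\sum_j\ip{[q]_{x_j}+\textstyle\sum_i[q]_{x_{ij}}}{x_j}+\sum_{i,j}\ip{[q]_{y_{ij}}+[q]_{y_i}}{y_{ij}},
\]
which vanishes identically exactly when \eqref{eq:Sperp-identities} holds. I expect no real obstacle; the only care needed is bookkeeping of signs and index ranges.
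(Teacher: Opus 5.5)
Your proposal is correct and follows the paper's proof: both use $\cS^\perp=(\ker R)^\perp=\ran R^*$ and read \eqref{eq:Sperp-coordinates} off the adjoint formula. Your explicit check of $R^*$ and of the two-way equivalence with \eqref{eq:Sperp-identities} fills in details the paper leaves implicit, and your parametrization of $\cS$ by the free variables $x_j$, $y_{ij}$ is a valid self-contained alternative that gives \eqref{eq:Sperp-identities} directly without invoking $\ran R^*$.
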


\begin{proof}
The formula for $R^*$ gives \eqref{eq:Sperp-coordinates}, and
$(\ker R)^\perp=\ran R^*$ makes the characterization necessary and
sufficient.
\end{proof}

\begin{lemma}[Projection and dual invariance]\label{lem:invariance}
Suppose $\eta^0\in\cS$ and $u^0\in\cS^\perp$; in particular, the zero
initialization has this property. Then, for every $k\geq0$,
\begin{equation}
 \eta^k\in\cS,\qquad u^k\in\cS^\perp,
 \label{eq:invariance}
\end{equation}
and \eqref{eq:full-admm} reduces to
\begin{subequations}\label{eq:orth-admm}
\begin{align}
 \xi^{k+1}&=\prox_{\Phi/\rho}(\eta^k-u^k),\label{eq:orth-prox}\\
 \eta^{k+1}&=P_{\cS}\xi^{k+1},\label{eq:orth-proj}\\
 u^{k+1}&=u^k+P_{\cS^\perp}\xi^{k+1}.\label{eq:orth-dual}
\end{align}
\end{subequations}
Moreover, even without $u^k\in\cS^\perp$, one projection--dual update puts
$u^{k+1}$ in $\cS^\perp$.
\end{lemma}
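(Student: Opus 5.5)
The plan is an induction on $k$ that uses only the linearity of $P_{\cS}$ and the orthogonal decomposition $\cH=\cS\oplus\cS^\perp$. No property of $\Phi$ is needed beyond the fact that $\xi^{k+1}$ is some well-defined point of $\cH$ (Assumption~\ref{ass:main}). The base case $k=0$ is the hypothesis $\eta^0\in\cS$, $u^0\in\cS^\perp$. The zero initialization satisfies it trivially, since both subspaces contain $0$.

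For the inductive step, assume $\eta^k\in\cS$ and $u^k\in\cS^\perp$. The prox step \eqref{eq:full-prox} is unchanged, which gives \eqref{eq:orth-prox}. For the projection step, $P_{\cS}$ is linear and $P_{\cS}u^k=0$ because $u^k\in\cS^\perp$. Hence
\[
\eta^{k+1}=P_{\cS}(\xi^{k+1}+u^k)=P_{\cS}\xi^{k+1}+P_{\cS}u^k=P_{\cS}\xi^{k+1}\in\cS,
\]
which is \eqref{eq:orth-proj} and the primal half of \eqref{eq:invariance}. For the dual step, substitute into \eqref{eq:full-dual} and use $I-P_{\cS}=P_{\cS^\perp}$:
\[
u^{k+1}=u^k+\xi^{k+1}-P_{\cS}\xi^{k+1}=u^k+P_{\cS^\perp}\xi^{k+1}.
\]
This is \eqref{eq:orth-dual}. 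It is a sum of two elements of the subspace $\cS^\perp$, so it lies in $\cS^\perp$, which closes the induction.

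For the final claim, drop the assumption $u^k\in\cS^\perp$ and keep the general formula $\eta^{k+1}=P_{\cS}(\xi^{k+1}+u^k)$. Then
\[
u^{k+1}=(\xi^{k+1}+u^k)-P_{\cS}(\xi^{k+1}+u^k)=P_{\cS^\perp}(\xi^{k+1}+u^k)\in\cS^\perp,
\]
whatever $u^k$ was. Likewise $\eta^{k+1}\in\cS$ automatically, so from that index on the invariant-space hypothesis holds and the induction above applies.

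There is no real obstacle. The one point needing care is the order of the argument: $P_{\cS}u^k=0$ must be established before \eqref{eq:orth-dual} is derived, since the simplified dual formula relies on the simplified projection. Keeping the general identity $u^{k+1}=P_{\cS^\perp}(\xi^{k+1}+u^k)$ as the starting point handles both the main statement and the remark in one computation. Lemma~\ref{lem:Sperp} is not needed here; it becomes relevant only when these updates are later translated into the coordinates (3)--(10).
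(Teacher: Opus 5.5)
Your proposal is correct and follows essentially the paper's proof: both use the identity $u^{k+1}=P_{\cS^\perp}(\xi^{k+1}+u^k)$ together with $P_{\cS}u^k=0$ under the induction hypothesis to obtain the simplified projection and dual updates. The only cosmetic difference is that the paper reads off $u^{k+1}\in\cS^\perp$ directly from that general identity, whereas in your inductive step you deduce it from the simplified formula as a sum of two elements of $\cS^\perp$.
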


\begin{proof}
Equation \eqref{eq:full-proj} gives $\eta^{k+1}\in\cS$. Also,
\[
 u^{k+1}
 =\xi^{k+1}+u^k-P_{\cS}(\xi^{k+1}+u^k)
 =P_{\cS^\perp}(\xi^{k+1}+u^k),
\]
which belongs to $\cS^\perp$ for arbitrary $u^k$. Under the induction
hypothesis $P_{\cS}u^k=0$, so
$P_{\cS}(\xi^{k+1}+u^k)=P_{\cS}\xi^{k+1}$. Substitution into the dual
update gives \eqref{eq:orth-dual}.
\end{proof}

\subsection{Iteration-by-iteration equivalence with (3)--(10)}

For fixed $j$ and $i$, respectively, define
\[
 \cS_{x,j}:=\{(x_j,\{x_{ij}\}_{i=1}^M):x_{ij}=x_j\ \forall i\},
 \qquad
 \cS_{y,i}:=\{(y_i,\{y_{ij}\}_{j=1}^N):y_i=\textstyle\sum_jy_{ij}\}.
\]
Then $\cS$ is the Cartesian product of these row and column subspaces,
which act on disjoint coordinate families. Their orthogonal projections are
explicit. For fixed $j$, let
\begin{equation}
 \bar c:=\avg(c,\{c_i\}_{i=1}^M)
 :=\frac{c+\sum_{i=1}^M c_i}{M+1},\qquad
 P_{\cS_{x,j}}(c,c_1,\ldots,c_M)
 =(\bar c,\bar c,\ldots,\bar c),
 \label{eq:avg}
\end{equation}
For fixed $i$, put $d=(c-\sum_{j=1}^N c_j)/(N+1)$. Then
\begin{equation}
 \exch(c,\{c_j\}_{j=1}^N)
 :=(c-d,c_1+d,\ldots,c_N+d)
 =P_{\cS_{y,i}}(c,c_1,\ldots,c_N).
 \label{eq:exch}
\end{equation}
These are precisely the published average and exchange maps
\citep{ParikhBoyd2010,ParikhBoyd2014}.

Following the original notation \citep{ParikhBoyd2010}, the graph pair is $(y_{ij},x_{ij})$ and the
objective blocks are $l_i,r_j$.  Our $\prox_{h/\rho}$ is its parameter-$\rho$ proximal operator.
The retained coordinates of the full variables match the published iterates
as follows \citep{ParikhBoyd2010}:
\[
 \begin{aligned}
 [\eta^k]_{y_i}&=y_i^k,
 & [\eta^k]_{x_j}&=x_j^k,\\
 [\eta^k]_{y_{ij}}&=y_{ij}^k,
 & [\eta^k]_{x_{ij}}&=x_j^k,\\
 [\xi^{k+1}]_{y_i}&=y_i^{k+1/2},
 & [\xi^{k+1}]_{x_j}&=x_j^{k+1/2},\\
 [\xi^{k+1}]_{y_{ij}}&=y_{ij}^{k+1/2},
 & [\xi^{k+1}]_{x_{ij}}&=x_{ij}^{k+1/2}.
 \end{aligned}
\]
In particular, $[\eta^k]_{x_{ij}}=x_j^k$ is an eliminated consistency
copy, whereas $x_{ij}^{k+1/2}$ is a retained local prediction.

\begin{proposition}[Exact recovery of the reduced algorithm]\label{prop:exact-recovery}
Assume that $\eta^k\in\cS$ and $u^k\in\cS^\perp$, and define the reduced
dual coordinates by
\begin{equation}
 \widetilde y_i^k:=[u^k]_{y_i},\qquad
 \widetilde x_j^k:=[u^k]_{x_j},\qquad
 \widetilde x_{ij}^k:=[u^k]_{x_{ij}}.
 \label{eq:reduced-duals}
\end{equation}
Then \eqref{eq:orth-admm}, coordinate by coordinate, is exactly the
published iteration \eqref{eq:PB3}--\eqref{eq:PB10} in
\citep{ParikhBoyd2010}.

Conversely, any reduced sequence satisfying
\eqref{eq:PB3}--\eqref{eq:PB10} and initialized with
\begin{equation}
 y_i^0=\sum_{j=1}^N y_{ij}^0,
 \qquad
 \widetilde x_j^0+\sum_{i=1}^M\widetilde x_{ij}^0=0
 \label{eq:reduced-init}
\end{equation}
has a unique full-space reconstruction after setting
$[\eta^k]_{x_{ij}}=x_j^k$ and
$[u^k]_{y_{ij}}=-\widetilde y_i^k$.
\end{proposition}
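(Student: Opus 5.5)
Both directions come down to unpacking \eqref{eq:orth-admm} coordinate by coordinate, using Lemma~\ref{lem:Sperp} for the structure of $u^k$ and the product form of $\cS$.

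\textbf{Forward direction: the prediction step.} Since $u^k\in\cS^\perp$, Lemma~\ref{lem:Sperp} gives $[u^k]_{y_{ij}}=-\widetilde y_i^k$. Since $\eta^k\in\cS$, we have $[\eta^k]_{x_{ij}}=x_j^k$. Hence the argument $\eta^k-u^k$ has the following coordinates:
\begin{itemize}
\item $y_i^k-\widetilde y_i^k$ in the $y_i$ slot;
\item $x_j^k-\widetilde x_j^k$ in the $x_j$ slot;
\item $y_{ij}^k+\widetilde y_i^k$ in the $y_{ij}$ slot;
\item $x_j^k-\widetilde x_{ij}^k$ in the $x_{ij}$ slot.
\end{itemize}
Because $\Phi$ in \eqref{eq:Phi} is separable across the blocks $y_i$, $x_j$ and $(y_{ij},x_{ij})$, the map $\prox_{\Phi/\rho}$ splits into $\prox_{l_i/\rho}$, $\prox_{r_j/\rho}$ and $\prox_{\ind_{C_{ij}}/\rho}=\Pi_{ij}$. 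This is exactly \eqref{eq:PB3}--\eqref{eq:PB5}.

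\textbf{Forward direction: projection and dual steps.} Since $\cS=\prod_j\cS_{x,j}\times\prod_i\cS_{y,i}$ on disjoint coordinates, $P_\cS$ acts blockwise by \eqref{eq:avg}--\eqref{eq:exch}. Thus \eqref{eq:orth-proj} gives \eqref{eq:PB6}--\eqref{eq:PB7}, with $[\eta^{k+1}]_{x_{ij}}=x_j^{k+1}$.

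For the dual update I would use $P_{\cS^\perp}\xi^{k+1}=\xi^{k+1}-\eta^{k+1}$ and read \eqref{eq:orth-dual} in the slots $x_j$, $y_i$ and $x_{ij}$. This yields \eqref{eq:PB8}--\eqref{eq:PB10}, where the $x_{ij}$ slot uses $[\eta^{k+1}]_{x_{ij}}=x_j^{k+1}$. The $y_{ij}$ slot yields no new information: Lemma~\ref{lem:invariance} gives $u^{k+1}\in\cS^\perp$, so that slot is automatically $-\widetilde y_i^{k+1}$. Consistency is also automatic, since the exchange update gives $y_{ij}^{k+1/2}-y_{ij}^{k+1}=-d=-(y_i^{k+1/2}-y_i^{k+1})$.

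\textbf{Converse.} Given a reduced sequence, I define $\eta^k$ from $(y_i^k,x_j^k,y_{ij}^k,x_j^k)$ and $u^k$ from $(\widetilde y_i^k,\widetilde x_j^k,-\widetilde y_i^k,\widetilde x_{ij}^k)$. Uniqueness is immediate, since every slot is either retained or fixed by the prescribed rule. I would then proceed by induction on $k$.
\begin{itemize}
\item \emph{Base case.} \eqref{eq:reduced-init} together with the definition of $[\eta^0]_{x_{ij}}$ gives $\eta^0\in\cS$. By the identities \eqref{eq:Sperp-identities} of Lemma~\ref{lem:Sperp}, it also gives $u^0\in\cS^\perp$.
\item \emph{Inductive step.} Assuming $\eta^k\in\cS$ and $u^k\in\cS^\perp$, the forward computation shows that (3)--(10) reproduce one step of \eqref{eq:orth-admm}. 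Membership of $\eta^{k+1}$ in $\cS$ holds because it is a projection output.
\end{itemize}

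\textbf{Main obstacle.} The delicate point is preserving the zero-sum condition $\widetilde x_j^{k+1}+\sum_i\widetilde x_{ij}^{k+1}=0$ in the converse direction, where we have only the reduced updates. Summing \eqref{eq:PB8} and \eqref{eq:PB10}, the increment is
\[
x_j^{k+1/2}+\sum_i x_{ij}^{k+1/2}-(M+1)x_j^{k+1}.
\]
This vanishes by the definition of $\avg$ in \eqref{eq:avg}, which closes the induction.
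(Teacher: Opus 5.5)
Your proposal is correct and follows the paper's proof essentially step for step. The prediction inputs come from the separability of $\Phi$, Lemma~\ref{lem:Sperp} and $\eta^k\in\cS$; the blockwise projections \eqref{eq:avg}--\eqref{eq:exch} give (6)--(7); and the converse uses invariance of the zero-sum dual relation under averaging, together with the exchange identity, to show the omitted $y_{ij}$ dual slot is the negative of (9).
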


\begin{proof}
Separability of \eqref{eq:Phi} makes \eqref{eq:orth-prox} split into the two
proximal mappings and the $MN$ graph projections. Lemma
\ref{lem:Sperp} gives
\[
 [u^k]_{y_{ij}}=-\widetilde y_i^k,\qquad
 [u^k]_{x_j}+\sum_i [u^k]_{x_{ij}}=0.
\]
Since $\eta^k\in\cS$, its local coordinate
$[\eta^k]_{x_{ij}}$ equals $x_j^k$. Therefore
the four inputs to the local proximal step are exactly those in
\eqref{eq:PB3}--\eqref{eq:PB5}; the plus sign in the $y_{ij}$ input is the
identity $-[u^k]_{y_{ij}}=\widetilde y_i^k$. Equations
\eqref{eq:avg}--\eqref{eq:exch} give \eqref{eq:PB6}--\eqref{eq:PB7}, and
the remaining displayed equations are the corresponding coordinates of
\eqref{eq:orth-dual}.

Conversely, averaging implies
\[
 \widetilde x_j^{k+1}+\sum_i\widetilde x_{ij}^{k+1}
 =\widetilde x_j^k+\sum_i\widetilde x_{ij}^k,
\]
so the second relation in \eqref{eq:reduced-init} is invariant. The exchange formula similarly makes
the omitted $y_{ij}$-dual update the negative of \eqref{eq:PB9}. Lemma
\ref{lem:Sperp} then reconstructs the omitted coordinates, and the full
iteration follows. Thus the equivalence holds at every iteration, not only
at a limit point.
\end{proof}

The indices $k+1/2$ and $k+1$ are the pre- and post-projection stages,
corresponding to $\xi^{k+1}$ and $\eta^{k+1}$. Zero initialization satisfies
\eqref{eq:reduced-init}, and the identity
$[u^k]_{y_{ij}}=-\widetilde y_i^k$ reconstructs an omitted coordinate rather
than imposing another assumption.

\section{Mixed VI and prediction--correction representation}

Set $\lambda^k:=\rho u^k$ and introduce the invariant state space
\begin{equation}
 \cV:=\cS\times\cS^\perp,
 \qquad v=(\eta,\lambda).
 \label{eq:V}
\end{equation}
To retain the local proximal point $\xi$ while using the consistency
coordinate $\eta=P_{\cS}\xi$, define
\begin{equation}
 \cW:=\{w=(\xi,\eta,\lambda):
 \eta=P_{\cS}\xi,\ \lambda\in\cS^\perp\}
 \label{eq:W}
\end{equation}
and the linear operator
\begin{equation}
 F(w):=(\lambda,0,\eta-\xi).
 \label{eq:F}
\end{equation}
For $w,w'\in\cW$, one has
\begin{equation}
 \ip{w-w'}{F(w)-F(w')}
 =\ip{\lambda-\lambda'}{\eta-\eta'}=0,
 \label{eq:restricted-skew}
\end{equation}
because $\eta-\eta'\in\cS$ and
$\lambda-\lambda'\in\cS^\perp$. Thus $F$ is skew, and in particular
monotone, on $\cW$.

Consider the mixed VI
\begin{equation}
 w^\star\in\cW,\qquad
 \Phi(\xi)-\Phi(\xi^\star)
 +\ip{w-w^\star}{F(w^\star)}\geq0
 \quad\forall w\in\cW.
 \label{eq:mvi}
\end{equation}

\begin{proposition}[Equivalence of the mixed VI and the KKT system]
\label{prop:vi-kkt}
A point $w^\star=(\xi^\star,\eta^\star,\lambda^\star)$ solves
\eqref{eq:mvi} if and only if
\begin{equation}
 \xi^\star=\eta^\star\in\cS,\qquad
 \lambda^\star\in\cS^\perp,\qquad
 -\lambda^\star\in\partial\Phi(\xi^\star).
 \label{eq:kkt}
\end{equation}
Consequently, $\xi^\star$ solves \eqref{eq:lifted}.
\end{proposition}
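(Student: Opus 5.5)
The plan is to unpack the mixed VI \eqref{eq:mvi} using the definitions \eqref{eq:W} of $\cW$ and \eqref{eq:F} of $F$, and then test it with well-chosen $w\in\cW$. For $w=(\xi,\eta,\lambda)\in\cW$, the quantity $\ip{w-w^\star}{F(w^\star)}$ expands as
\[
\ip{\xi-\xi^\star}{\lambda^\star}+\ip{\lambda-\lambda^\star}{\eta^\star-\xi^\star},
\]
because the middle component of $F(w^\star)$ is zero. Membership $w^\star\in\cW$ gives $\eta^\star=P_{\cS}\xi^\star$, so $\eta^\star-\xi^\star=-P_{\cS^\perp}\xi^\star\in\cS^\perp$. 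The only structural facts needed are that $\xi$ ranges freely over $\cH$ (with $\eta=P_{\cS}\xi$ determined by it) and that $\lambda$ ranges freely over $\cS^\perp$, independently of $\xi$.

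\emph{Necessity.} I would first fix $\xi=\xi^\star$, hence $\eta=\eta^\star$, and let $\lambda$ vary over $\cS^\perp$. Then \eqref{eq:mvi} reduces to $\ip{\lambda-\lambda^\star}{\eta^\star-\xi^\star}\ge 0$ for all $\lambda\in\cS^\perp$. Since $\lambda-\lambda^\star$ sweeps the whole subspace $\cS^\perp$, which contains both signs of every direction, this inequality forces $\eta^\star-\xi^\star\perp\cS^\perp$. But $\eta^\star-\xi^\star$ itself lies in $\cS^\perp$, so it is zero. This gives $\xi^\star=\eta^\star\in\cS$, and $\lambda^\star\in\cS^\perp$ holds by definition of $\cW$. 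Next I would set $\lambda=\lambda^\star$ and let $\xi\in\cH$ be arbitrary. Then \eqref{eq:mvi} becomes $\Phi(\xi)\ge\Phi(\xi^\star)+\ip{-\lambda^\star}{\xi-\xi^\star}$ for all $\xi$, which is exactly $-\lambda^\star\in\partial\Phi(\xi^\star)$. Here I should note that $\Phi(\xi^\star)$ is finite: this is implicit in \eqref{eq:mvi} being meaningful, and it follows by testing with a point of finite value, which exists by Assumption~\ref{ass:main}.

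\emph{Sufficiency and consequence.} Conversely, if \eqref{eq:kkt} holds, then $\eta^\star-\xi^\star=0$ kills the $\lambda$-term. The subgradient inequality then gives $\Phi(\xi)-\Phi(\xi^\star)+\ip{\xi-\xi^\star}{\lambda^\star}\ge0$ for every $\xi$, so \eqref{eq:mvi} holds for all $w\in\cW$; also $w^\star\in\cW$, since $P_{\cS}\xi^\star=\xi^\star=\eta^\star$. For optimality, take any $z\in\cS$. The subgradient inequality gives $\Phi(z)\ge\Phi(\xi^\star)-\ip{\lambda^\star}{z-\xi^\star}$, and the last inner product vanishes because $z-\xi^\star\in\cS$ and $\lambda^\star\in\cS^\perp$. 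Hence $\xi^\star$ solves \eqref{eq:lifted}, and by Proposition~\ref{prop:primal-equivalence} this also solves \eqref{eq:original}.

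The only delicate step is the first one: I must use that $\lambda$ may be varied independently of $\xi$ within $\cW$. I must also use that $\eta^\star-\xi^\star\in\cS^\perp$, so that orthogonality to $\cS^\perp$ forces it to vanish rather than merely to lie in $\cS$. Everything else is the definition of the subdifferential.
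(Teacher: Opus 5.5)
Your proposal is correct and follows essentially the same route as the paper: you vary $\lambda$ over $\cS^\perp$ (with $\xi=\xi^\star$ fixed) and use $\eta^\star-\xi^\star=-P_{\cS^\perp}\xi^\star\in\cS^\perp$ to force $\xi^\star=\eta^\star$, then let $\xi$ range over $\cH$ via $\eta=P_{\cS}\xi$ to get the subgradient inequality, with the converse from the subgradient inequality and orthogonality. Your remark on the finiteness of $\Phi(\xi^\star)$ and your explicit optimality argument over $z\in\cS$ are small clarifications that the paper leaves implicit.
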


\begin{proof}
In \eqref{eq:mvi}, vary $\lambda$ freely in the linear space
$\cS^\perp$. Since $\eta^\star-\xi^\star=-P_{\cS^\perp}\xi^\star$ also
belongs to $\cS^\perp$, the resulting inequality for all $\lambda$ forces
$\eta^\star-\xi^\star=0$. Every $\xi\in\cH$ occurs in a point of $\cW$ by
choosing $\eta=P_{\cS}\xi$. The remaining inequality therefore says
\[
 \Phi(\xi)-\Phi(\xi^\star)
 +\ip{\xi-\xi^\star}{\lambda^\star}\geq0
 \quad\forall\xi\in\cH,
\]
which is equivalent to $-\lambda^\star\in\partial\Phi(\xi^\star)$.
The converse is immediate from the subgradient inequality and
orthogonality.
\end{proof}

Given $v^k=(\eta^k,\lambda^k)\in\cV$, define the predictor
\begin{subequations}\label{eq:predictor}
\begin{align}
 \widehat\xi^k
 &:=\prox_{\Phi/\rho}(\eta^k-\rho^{-1}\lambda^k),
 \label{eq:pred-xi}\\
 \widehat\eta^k
 &:=P_{\cS}\widehat\xi^k,
 \label{eq:pred-eta}\\
 \widehat\lambda^k
 &:=\lambda^k+\rho(\widehat\xi^k-\widehat\eta^k).
 \label{eq:pred-lambda}
\end{align}
\end{subequations}
Thus $\widehat w^k=(\widehat\xi^k,\widehat\eta^k,
\widehat\lambda^k)\in\cW$ and
$\widehat v^k=(\widehat\eta^k,\widehat\lambda^k)\in\cV$.
By Lemma \ref{lem:invariance} and $\lambda^k=\rho u^k$, these symbols are
identified with the full ADMM stages by
\[
 \widehat\xi^k=\xi^{k+1},\qquad
 \widehat\eta^k=\eta^{k+1},\qquad
 \widehat\lambda^k=\lambda^{k+1}.
\]
Thus the hat marks the predictor and agrees with the integer-indexed
full-space ADMM block. Its coordinates are precisely the half-step local
primal quantities in the reduced notation \citep{ParikhBoyd2010}.

\begin{theorem}[Prediction inequality and framework matrices]
\label{thm:prediction}
For every $w=(\xi,\eta,\lambda)\in\cW$, the predictor
\eqref{eq:predictor} satisfies
\begin{equation}
 \Phi(\xi)-\Phi(\widehat\xi^k)
 +\ip{w-\widehat w^k}{F(\widehat w^k)}
 \geq
 \ip{v-\widehat v^k}{Q(v^k-\widehat v^k)},
 \label{eq:prediction-inequality}
\end{equation}
where, on $\cV=\cS\times\cS^\perp$,
\begin{equation}
 Q:=\begin{pmatrix}
 \rho I_{\cS}&0\\[1mm]
 0&\rho^{-1}I_{\cS^\perp}
 \end{pmatrix}.
 \label{eq:Q}
\end{equation}
The original block-splitting iteration is the correction
\begin{equation}
 v^{k+1}=v^k-M(v^k-\widehat v^k),
 \qquad M:=I_{\cV},
 \label{eq:correction}
\end{equation}
and one may take
\begin{equation}
 H:=Q,\qquad
 G:=Q+Q^*-M^*HM=Q.
 \label{eq:HMG}
\end{equation}
Consequently,
\begin{equation}
 HM=Q\quad\text{and}\quad G\succ0,
 \label{eq:C2C3}
\end{equation}
which are precisely the metric compatibility condition and positive
definiteness of the so-called profit matrix in the prediction--correction framework
\citep{HeYuan2023}.
\end{theorem}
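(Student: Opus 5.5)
The plan is to derive the prediction inequality \eqref{eq:prediction-inequality} from the optimality condition of the proximal step \eqref{eq:pred-xi}, and then split the resulting inner products along $\cS\oplus\cS^\perp$.

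\textbf{Optimality of the proximal step.} By the definition of $\prox_{\Phi/\rho}$ and convexity of $\Phi$ (Assumption~\ref{ass:main}), we have
\[
 0\in\partial\Phi(\widehat\xi^k)+\rho(\widehat\xi^k-\eta^k)+\lambda^k .
\]
Hence, for every $\xi\in\cH$,
\[
 \Phi(\xi)-\Phi(\widehat\xi^k)+\ip{\xi-\widehat\xi^k}{\lambda^k+\rho(\widehat\xi^k-\eta^k)}\ge0 .
\]
Next we rewrite the multiplier-like vector. Write $\widehat\xi^k=\widehat\eta^k+p$ with $p:=P_{\cS^\perp}\widehat\xi^k=\rho^{-1}(\widehat\lambda^k-\lambda^k)$, using \eqref{eq:pred-eta}--\eqref{eq:pred-lambda}. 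Then
\[
 \lambda^k+\rho(\widehat\xi^k-\eta^k)=\widehat\lambda^k+\rho(\widehat\eta^k-\eta^k).
\]
The resulting inequality is therefore
\[
 \Phi(\xi)-\Phi(\widehat\xi^k)+\ip{\xi-\widehat\xi^k}{\widehat\lambda^k}\ge\rho\ip{\xi-\widehat\xi^k}{\eta^k-\widehat\eta^k}.
\]

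\textbf{Matching the left side of \eqref{eq:prediction-inequality}.} Take $w\in\cW$. By \eqref{eq:F},
\[
 \ip{w-\widehat w^k}{F(\widehat w^k)}=\ip{\xi-\widehat\xi^k}{\widehat\lambda^k}+\ip{\lambda-\widehat\lambda^k}{\widehat\eta^k-\widehat\xi^k}.
\]
The second term equals $-\ip{\lambda-\widehat\lambda^k}{p}=-\rho^{-1}\ip{\lambda-\widehat\lambda^k}{\widehat\lambda^k-\lambda^k}$. Moving it to the right produces $\rho^{-1}\ip{\lambda-\widehat\lambda^k}{\lambda^k-\widehat\lambda^k}$, which is the $\cS^\perp$-block of $Q$ in \eqref{eq:Q}.

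\textbf{Matching the $\cS$-block.} For the term $\rho\ip{\xi-\widehat\xi^k}{\eta^k-\widehat\eta^k}$, note that $\eta^k-\widehat\eta^k\in\cS$. Hence only $P_{\cS}(\xi-\widehat\xi^k)=\eta-\widehat\eta^k$ contributes, because $w\in\cW$ forces $\eta=P_{\cS}\xi$. This yields $\rho\ip{\eta-\widehat\eta^k}{\eta^k-\widehat\eta^k}$, and summing the two blocks gives \eqref{eq:prediction-inequality}. The delicate point is this orthogonal-decomposition bookkeeping, in particular the sign of the $\lambda$ term. It is exactly where the invariance $\lambda^k\in\cS^\perp$ and $\widehat\lambda^k\in\cS^\perp$ (Lemma~\ref{lem:invariance}) and the constraint $\eta=P_{\cS}\xi$ built into $\cW$ are needed.

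\textbf{Correction step and framework matrices.} Lemma~\ref{lem:invariance} together with the identification $\widehat\eta^k=\eta^{k+1}$, $\widehat\lambda^k=\lambda^{k+1}$ shows that $v^{k+1}=\widehat v^k$, which is \eqref{eq:correction} with $M=I_{\cV}$. Since $Q$ is self-adjoint, taking $H=Q$ gives $HM=Q$ and
\[
 G=2Q-Q=Q.
\]
Finally, $G=Q\succ0$ on $\cV$ because $\rho>0$, which gives \eqref{eq:C2C3}.
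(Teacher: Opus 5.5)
Your proposal is correct and follows essentially the same route as the paper. You rewrite the prox optimality condition in terms of $\widehat\lambda^k$, obtain the $\cS$-block from $\eta^k-\widehat\eta^k\in\cS$ and $\eta=P_{\cS}\xi$, and match the $\cS^\perp$-block using \eqref{eq:pred-feas}, followed by the same $M=I$, $H=G=Q$ bookkeeping.

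One wording fix: the term $\ip{\lambda-\widehat\lambda^k}{\widehat\eta^k-\widehat\xi^k}$ is not ``moved'' across the inequality, which would flip its sign. It is added to both sides, because it identically equals the $\cS^\perp$-block of the right-hand side.
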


\begin{proof}
The optimality condition of \eqref{eq:pred-xi} is
\[
 0\in\partial\Phi(\widehat\xi^k)+\lambda^k
       +\rho(\widehat\xi^k-\eta^k).
\]
Using \eqref{eq:pred-lambda} and setting
$d_\eta^k:=\eta^k-\widehat\eta^k$, this becomes
\begin{equation}
 -\widehat\lambda^k+\rho d_\eta^k
 \in\partial\Phi(\widehat\xi^k).
 \label{eq:pred-subgrad}
\end{equation}
Hence
\begin{equation}
 \Phi(\xi)-\Phi(\widehat\xi^k)
 +\ip{\xi-\widehat\xi^k}{\widehat\lambda^k}
 \geq \rho\ip{\xi-\widehat\xi^k}{d_\eta^k}.
 \label{eq:subgrad-ineq}
\end{equation}
Because $d_\eta^k\in\cS$ and
$\eta=P_{\cS}\xi$, the right-hand side equals
$\rho\ip{\eta-\widehat\eta^k}{d_\eta^k}$. Moreover,
\eqref{eq:pred-lambda} gives
\begin{equation}
 \widehat\eta^k-\widehat\xi^k
 =\rho^{-1}(\lambda^k-\widehat\lambda^k).
 \label{eq:pred-feas}
\end{equation}
Adding
$\ip{\lambda-\widehat\lambda^k}{\widehat\eta^k-\widehat\xi^k}$
to \eqref{eq:subgrad-ineq} yields exactly
\eqref{eq:prediction-inequality} with \eqref{eq:Q}. Lemma
\ref{lem:invariance} shows that \eqref{eq:predictor} produces the actual next
state, so \eqref{eq:correction} holds with $M=I_{\cV}$. Finally, $Q$ is
self-adjoint and positive definite on $\cV$, and the identities
\eqref{eq:HMG}--\eqref{eq:C2C3} follow directly.
\end{proof}

Positive definiteness in \eqref{eq:C2C3} holds on the invariant state space
$\cS\times\cS^\perp$ after redundant coordinates are removed; it is neither
an additional objective regularity assumption nor an added proximal term.

\section{Fundamental inequality and global convergence}

Equip $\cV$ with
\begin{equation}
 \normH{v}^2:=\ip{v}{Hv}
 =\rho\norm{\eta}^2+\rho^{-1}\norm{\lambda}^2.
 \label{eq:Hnorm}
\end{equation}

\begin{theorem}[Fundamental inequality]\label{thm:fundamental}
For every $w\in\cW$ and every $k\geq0$,
\begin{align}
 &\Phi(\xi)-\Phi(\widehat\xi^k)
 +\ip{w-\widehat w^k}{F(\widehat w^k)}
 \notag\\
 &\qquad\geq\frac12\left(
 \normH{v-v^{k+1}}^2-\normH{v-v^k}^2
 +\normH{v^k-v^{k+1}}^2\right).
 \label{eq:fundamental}
\end{align}
\end{theorem}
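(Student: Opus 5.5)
The plan is to start from the prediction inequality of Theorem~\ref{thm:prediction} and rewrite its right-hand side with the correction step \eqref{eq:correction} and the standard three-point identity in the $H$-norm. The point is that $M=I_{\cV}$ and $H=Q$, so $v^{k+1}=\widehat v^k$. The right-hand side of \eqref{eq:prediction-inequality} then reads
\[
 \ip{v-\widehat v^k}{Q(v^k-\widehat v^k)}
 =\ip{v-v^{k+1}}{H(v^k-v^{k+1})}.
\]
Here the identification $\widehat v^k=v^{k+1}$ is justified by Lemma~\ref{lem:invariance}, together with the remark after \eqref{eq:predictor} that $\widehat\eta^k=\eta^{k+1}$ and $\widehat\lambda^k=\lambda^{k+1}$.

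Next I apply the polarization identity $2\ip{a-b}{H(c-b)}=\normH{a-b}^2-\normH{a-c}^2+\normH{c-b}^2$, which holds for any self-adjoint positive semidefinite $H$ on $\cV$. I take $a=v$, $b=v^{k+1}$, $c=v^k$. This gives
\[
 \ip{v-v^{k+1}}{H(v^k-v^{k+1})}
 =\tfrac12\left(\normH{v-v^{k+1}}^2-\normH{v-v^k}^2+\normH{v^k-v^{k+1}}^2\right),
\]
which is exactly the right-hand side of \eqref{eq:fundamental}. The identity itself is verified by expanding both sides using $\ip{x}{Hy}=\ip{Hx}{y}$. Here $Q$ is block diagonal with positive scalar blocks on $\cS$ and $\cS^\perp$, hence self-adjoint, as noted in Theorem~\ref{thm:prediction}.

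The main care point is bookkeeping rather than any real obstacle. First, $v=(\eta,\lambda)$ must be the $\cV$-component of an arbitrary $w\in\cW$, so that $v,v^k,v^{k+1}$ all lie in $\cV=\cS\times\cS^\perp$, where $H$ is defined and the norm \eqref{eq:Hnorm} makes sense. Second, the left-hand side must be exactly that of \eqref{eq:prediction-inequality}. Both points hold directly: $w\in\cW$ forces $\eta=P_{\cS}\xi\in\cS$ and $\lambda\in\cS^\perp$, and $v^k\in\cV$ for all $k$ by Lemma~\ref{lem:invariance} with $\lambda^k=\rho u^k$. In the general He--Yuan setting one would instead use $HM=Q$ and a term involving $G$. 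Here $G=Q$ and $M=I$, so the $\normH{v^k-v^{k+1}}^2$ term appears directly with no extra $G$-remainder.
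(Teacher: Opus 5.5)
Your proposal is correct and is essentially the paper's proof. You use Theorem~\ref{thm:prediction} to get $Q=H$ and $\widehat v^k=v^{k+1}$ (from $M=I_{\cV}$), then apply the polarization identity to the right-hand side of \eqref{eq:prediction-inequality}; your added check that $v,v^k,v^{k+1}\in\cV$ just makes explicit what the paper leaves implicit.
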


\begin{proof}
By Theorem \ref{thm:prediction}, $Q=H$ and
$\widehat v^k=v^{k+1}$. Apply the polarization identity
\[
 2\ip{v-v^{k+1}}{H(v^k-v^{k+1})}
 =\normH{v-v^{k+1}}^2-\normH{v-v^k}^2
  +\normH{v^k-v^{k+1}}^2
\]
to the right-hand side of \eqref{eq:prediction-inequality}.
\end{proof}

Let $\cV^\star$ denote the $v=(\eta,\lambda)$ components of the solutions of
\eqref{eq:mvi}.

\begin{corollary}[Fej\'er decrease]\label{cor:fejer}
For every $v^\star\in\cV^\star$,
\begin{equation}
 \normH{v^{k+1}-v^\star}^2
 \leq \normH{v^k-v^\star}^2
      -\normH{v^k-v^{k+1}}^2.
 \label{eq:fejer}
\end{equation}
In particular,
\begin{equation}
 \sum_{k=0}^{\infty}\normH{v^k-v^{k+1}}^2
 \leq \normH{v^0-v^\star}^2<\infty.
 \label{eq:summable}
\end{equation}
\end{corollary}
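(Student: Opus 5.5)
The plan is to specialize Theorem~\ref{thm:fundamental} to a solution point and show that its left-hand side is nonpositive. Fix $v^\star=(\eta^\star,\lambda^\star)\in\cV^\star$ and let $w^\star=(\xi^\star,\eta^\star,\lambda^\star)$ be the corresponding solution of \eqref{eq:mvi}. By Proposition~\ref{prop:vi-kkt}, $\xi^\star=\eta^\star\in\cS$ and $\eta^\star=P_{\cS}\xi^\star$, so $w^\star\in\cW$ is an admissible test point. Taking $w=w^\star$ in \eqref{eq:fundamental} gives
\[
 \Phi(\xi^\star)-\Phi(\widehat\xi^k)+\ip{w^\star-\widehat w^k}{F(\widehat w^k)}
 \geq\frac12\left(\normH{v^\star-v^{k+1}}^2-\normH{v^\star-v^k}^2+\normH{v^k-v^{k+1}}^2\right).
\]

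The key step is to prove that the left-hand side is $\leq0$. Both $w^\star$ and $\widehat w^k$ lie in $\cW$, so the skew identity \eqref{eq:restricted-skew} applies:
\[
\ip{w^\star-\widehat w^k}{F(\widehat w^k)}=\ip{w^\star-\widehat w^k}{F(w^\star)}.
\]
The left-hand side therefore equals
\[
-\bigl[\Phi(\widehat\xi^k)-\Phi(\xi^\star)+\ip{\widehat w^k-w^\star}{F(w^\star)}\bigr].
\]
The bracket is nonnegative by \eqref{eq:mvi} applied with test point $w=\widehat w^k\in\cW$. Rearranging yields \eqref{eq:fejer}.

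For the summability statement \eqref{eq:summable}, I sum \eqref{eq:fejer} from $k=0$ to $K$. The sum telescopes to
\[
\sum_{k=0}^K\normH{v^k-v^{k+1}}^2\le\normH{v^0-v^\star}^2-\normH{v^{K+1}-v^\star}^2\le\normH{v^0-v^\star}^2,
\]
and letting $K\to\infty$ gives the claim. Finiteness holds because $v^0\in\cV$ is a finite vector. Nonemptiness of $\cV^\star$ follows from Assumption~\ref{ass:main} together with Proposition~\ref{prop:vi-kkt}: the pair $(z^\star,\lambda^\star)$ there gives the solution $(z^\star,z^\star,\lambda^\star)$ of \eqref{eq:mvi}. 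If $\lambda$ is normalized as $\rho u$, the KKT multiplier is rescaled accordingly, which is harmless.

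The main point of care is the membership $w^\star\in\cW$ and $\widehat w^k\in\cW$, which is exactly what the skew identity needs. $\widehat w^k\in\cW$ holds by construction: $\widehat\eta^k=P_{\cS}\widehat\xi^k$, and $\widehat\lambda^k=\lambda^k+\rho P_{\cS^\perp}\widehat\xi^k\in\cS^\perp$. $w^\star\in\cW$ holds by Proposition~\ref{prop:vi-kkt}. Beyond this check the argument is routine.
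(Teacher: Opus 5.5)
Your proposal is correct and matches the paper's proof: you set $w=w^\star$ in the fundamental inequality, use the restricted skew identity to replace $F(\widehat w^k)$ by $F(w^\star)$, and apply the mixed VI at $w^\star$ with test point $\widehat w^k\in\cW$ to make the left-hand side nonpositive, then telescope. Your membership checks for $w^\star$ and $\widehat w^k$ spell out what the paper's short argument leaves implicit; the remark about rescaling the multiplier is unnecessary, since the assumption's $\lambda^\star$ enters the VI directly.
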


\begin{proof}
Set $w=w^\star$ in \eqref{eq:fundamental}. By the VI at $w^\star$ and the
restricted skew identity \eqref{eq:restricted-skew},
\[
 \Phi(\xi^\star)-\Phi(\widehat\xi^k)
 +\ip{w^\star-\widehat w^k}{F(\widehat w^k)}\leq0.
\]
Inequality \eqref{eq:fejer} follows. Summing it proves
\eqref{eq:summable}.
\end{proof}

\begin{theorem}[Global convergence]\label{thm:global}
Under Assumption \ref{ass:main}, the sequence $v^k$ converges to a point
$v^\infty=(z^\infty,\lambda^\infty)\in\cV^\star$. Moreover,
\begin{equation}
 \eta^k\to z^\infty,\qquad
 \xi^{k+1}\to z^\infty,\qquad
 \lambda^k\to\lambda^\infty,
 \label{eq:global-limits}
\end{equation}
where $z^\infty$ solves \eqref{eq:lifted}. Through Proposition
\ref{prop:exact-recovery}, all retained coordinates of (3)--(10) converge to
the corresponding coordinates of this primal-dual solution.
\end{theorem}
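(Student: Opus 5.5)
The argument is the standard Fej\'er/Opial one, carried out in the finite-dimensional Hilbert space $(\cV,\normH{\cdot})$.

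\emph{Step 1: boundedness and asymptotic regularity.} By Assumption~\ref{ass:main} and Proposition~\ref{prop:vi-kkt}, the KKT pair gives $w^\star=(z^\star,z^\star,\lambda^\star)\in\cW$ solving \eqref{eq:mvi}, so $\cV^\star\neq\emptyset$. Corollary~\ref{cor:fejer} then shows that $\normH{v^k-v^\star}$ is nonincreasing for every $v^\star\in\cV^\star$. Hence $(v^k)$ is bounded and $\normH{v^k-v^{k+1}}\to0$ by \eqref{eq:summable}. Since $H$ is positive definite on $\cV$, this means
\[
\eta^k-\eta^{k+1}\to0,\qquad \lambda^k-\lambda^{k+1}\to0 .
\]
The local point $\xi^{k+1}=\widehat\xi^k$ is not part of the state, so I control it separately. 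By \eqref{eq:pred-feas}, $P_{\cS^\perp}\xi^{k+1}=\xi^{k+1}-\eta^{k+1}=\rho^{-1}(\lambda^{k+1}-\lambda^k)\to0$. Therefore $\xi^{k+1}-\eta^{k+1}\to0$, and $(\xi^{k+1})$ is bounded.

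\emph{Step 2: cluster points solve the VI.} Take a subsequence with $v^{k_\ell}\to\bar v=(\bar\eta,\bar\lambda)$. By Step~1 we also have $v^{k_\ell+1}\to\bar v$ and $\xi^{k_\ell+1}\to\bar\eta$. This passage to the limit is the main obstacle, because $\Phi$ is only lower semicontinuous. I would avoid passing to the limit inside \eqref{eq:fundamental}, where $-\Phi(\widehat\xi^k)$ enters with an unfavourable sign. Instead I would use the subgradient inclusion \eqref{eq:pred-subgrad}:
\[
-\lambda^{k+1}+\rho(\eta^k-\eta^{k+1})\in\partial\Phi(\xi^{k+1}).
\]
The graph of $\partial\Phi$ is closed for a proper lsc convex $\Phi$ on a finite-dimensional space. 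Passing to the limit along the subsequence therefore gives $-\bar\lambda\in\partial\Phi(\bar\eta)$. Closedness of $\cS$ and $\cS^\perp$ gives $\bar\eta\in\cS$ and $\bar\lambda\in\cS^\perp$. By Proposition~\ref{prop:vi-kkt}, $(\bar\eta,\bar\eta,\bar\lambda)$ solves \eqref{eq:mvi}, so $\bar v\in\cV^\star$.

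\emph{Step 3: uniqueness of the limit.} Apply Corollary~\ref{cor:fejer} with $v^\star=\bar v$. The sequence $\normH{v^k-\bar v}$ is nonincreasing and has a subsequence tending to $0$, so the whole sequence converges to $v^\infty:=\bar v=(z^\infty,\lambda^\infty)$. Positive definiteness of $H$ turns this into Euclidean convergence $\eta^k\to z^\infty$ and $\lambda^k\to\lambda^\infty$. Together with $\xi^{k+1}-\eta^{k+1}\to0$ from Step~1, this yields $\xi^{k+1}\to z^\infty$, which is \eqref{eq:global-limits}. Proposition~\ref{prop:vi-kkt} also shows that $z^\infty$ solves \eqref{eq:lifted}.

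\emph{Step 4: transfer to the reduced iteration.} Use $u^k=\rho^{-1}\lambda^k$ and Proposition~\ref{prop:exact-recovery}. Every retained quantity in (3)--(10) is a coordinate of $\eta^k$, $\xi^{k+1}$ or $u^k$, namely $y_i^k,x_j^k,y_{ij}^k$, the half-step predictions, and $\widetilde y_i^k,\widetilde x_j^k,\widetilde x_{ij}^k$. Coordinate projections are continuous, so each of these converges to the corresponding coordinate of $(z^\infty,\rho^{-1}\lambda^\infty)$.
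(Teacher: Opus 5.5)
Your proof is correct and follows essentially the same route as the paper. It uses Fej\'er boundedness and summability of $\normH{v^k-v^{k+1}}^2$, then \eqref{eq:pred-feas} to force $\xi^{k+1}-\eta^{k+1}\to0$, then closedness of the graph of $\partial\Phi$ applied to \eqref{eq:pred-subgrad} along a subsequence, and finally the Fej\'er-with-cluster-point argument for convergence of the whole sequence. You also make explicit two points the paper leaves implicit: that Assumption~\ref{ass:main} with Proposition~\ref{prop:vi-kkt} makes $\cV^\star$ nonempty, and that the reduced duals converge to coordinates of $\rho^{-1}\lambda^\infty$ rather than of $\lambda^\infty$.
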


\begin{proof}
Fej\'er monotonicity makes $\{v^k\}$ bounded, and
\eqref{eq:summable} gives $v^{k+1}-v^k\to0$. Equation
\eqref{eq:pred-feas} then gives
$\widehat\xi^k-\widehat\eta^k\to0$, so every cluster point has equal primal
coordinates. Along a convergent subsequence, the subgradient in
\eqref{eq:pred-subgrad} converges to $-\bar\lambda$ at $\bar z$. Closedness of
the subdifferential graph yields \eqref{eq:kkt}. A Fej\'er sequence with a
cluster point in its target set converges to that point, proving
\eqref{eq:global-limits}.
\end{proof}

\section{Ergodic and pointwise rates}

\subsection{Ergodic \texorpdfstring{$O(1/K)$}{O(1/K)} rate}

For $K\geq1$, define the averages of the prediction points
\begin{equation}
 \bar w^K:=\frac1K\sum_{k=0}^{K-1}\widehat w^k
 =(\bar\xi^K,\bar\eta^K,\bar\lambda^K).
 \label{eq:ergodic-average}
\end{equation}
Since $\cW$ is convex, $\bar w^K\in\cW$.

\begin{theorem}[Ergodic mixed-VI bound]\label{thm:ergodic}
For every comparison point $w=(\xi,\eta,\lambda)\in\cW$,
\begin{equation}
 \Phi(\bar\xi^K)-\Phi(\xi)
 +\ip{\bar w^K-w}{F(w)}
 \leq \frac{\normH{v-v^0}^2}{2K}.
 \label{eq:ergodic-gap}
\end{equation}
Furthermore,
\begin{equation}
 \bar\xi^K-\bar\eta^K
 =\frac{\lambda^K-\lambda^0}{\rho K},
 \label{eq:ergodic-feasibility}
\end{equation}
and hence the averaged feasibility residual is $O(1/K)$. If
$w^\star$ is any solution of \eqref{eq:mvi}, then
\begin{equation}
 \left|\Phi(\bar\xi^K)-\Phi(\xi^\star)\right|
 \leq \frac{\normH{v^\star-v^0}^2}{2K}
 +\norm{\lambda^\star}\,\norm{\bar\xi^K-\bar\eta^K}
 =O(K^{-1}).
 \label{eq:ergodic-objective}
\end{equation}
\end{theorem}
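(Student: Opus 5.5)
We follow the standard He--Yuan ergodic argument built on the fundamental inequality of Theorem~\ref{thm:fundamental}. The first step replaces $F(\widehat w^k)$ by $F(w)$ on the left-hand side of \eqref{eq:fundamental}. By the restricted skew identity \eqref{eq:restricted-skew}, valid because both $w$ and $\widehat w^k$ lie in $\cW$, we have $\ip{w-\widehat w^k}{F(\widehat w^k)}=\ip{w-\widehat w^k}{F(w)}$. Dropping the nonnegative term $\normH{v^k-v^{k+1}}^2$ then gives, for every $k$,
\[
 \Phi(\widehat\xi^k)-\Phi(\xi)+\ip{\widehat w^k-w}{F(w)}
 \leq\tfrac12\left(\normH{v-v^k}^2-\normH{v-v^{k+1}}^2\right).
\]
Summing over $k=0,\dots,K-1$ telescopes the right-hand side to at most $\normH{v-v^0}^2/2$. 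Dividing by $K$ and using convexity of $\Phi$ (Jensen: $\Phi(\bar\xi^K)\le\frac1K\sum\Phi(\widehat\xi^k)$), together with linearity of $w'\mapsto\ip{w'-w}{F(w)}$, yields \eqref{eq:ergodic-gap}. Note that $\bar w^K\in\cW$ because $\cW$ is a linear subspace.

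For \eqref{eq:ergodic-feasibility}, we sum \eqref{eq:pred-feas}. Since $\widehat\lambda^k=\lambda^{k+1}$ by Lemma~\ref{lem:invariance}, we get $\widehat\xi^k-\widehat\eta^k=\rho^{-1}(\lambda^{k+1}-\lambda^k)$, and averaging telescopes to $(\lambda^K-\lambda^0)/(\rho K)$. Corollary~\ref{cor:fejer} keeps $\norm{\lambda^K}$ bounded, by $\norm{\lambda^\star}+\sqrt\rho\,\normH{v^0-v^\star}$, so the right-hand side is $O(1/K)$.

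The objective bound \eqref{eq:ergodic-objective} is the one step needing care, because the gap function must be specialized correctly.

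\emph{Upper bound.} Take $w=w^\star$ in \eqref{eq:ergodic-gap}, so $F(w^\star)=(\lambda^\star,0,\eta^\star-\xi^\star)=(\lambda^\star,0,0)$. The linear term is $\ip{\bar\xi^K-\xi^\star}{\lambda^\star}$. Since $\lambda^\star\in\cS^\perp$ and $\xi^\star\in\cS$, this equals $\ip{\bar\xi^K-\bar\eta^K}{\lambda^\star}$, using $\bar\eta^K\in\cS$. Hence
\[
 \Phi(\bar\xi^K)-\Phi(\xi^\star)\le\frac{\normH{v^\star-v^0}^2}{2K}+\norm{\lambda^\star}\norm{\bar\xi^K-\bar\eta^K}.
\]

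\emph{Lower bound.} Apply the subgradient inequality from \eqref{eq:kkt} at $\bar\xi^K$:
\[
 \Phi(\bar\xi^K)-\Phi(\xi^\star)\ge-\ip{\lambda^\star}{\bar\xi^K-\xi^\star}=-\ip{\lambda^\star}{\bar\xi^K-\bar\eta^K}\ge-\norm{\lambda^\star}\norm{\bar\xi^K-\bar\eta^K}.
\]

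Combining the two bounds gives the absolute-value estimate, and inserting \eqref{eq:ergodic-feasibility} with the bounded $\lambda^K$ gives $O(K^{-1})$.
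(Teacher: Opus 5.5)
Your proposal is correct and follows essentially the same route as the paper. You swap $F(\widehat w^k)$ for $F(w)$ via the restricted skew identity, drop the step term, telescope and apply Jensen, and sum the predictor multiplier update to get the feasibility identity. You then bound the objective from above by specializing the gap at $w^\star$ and from below by the subgradient inequality, using orthogonality to turn $\ip{\bar\xi^K-\xi^\star}{\lambda^\star}$ into $\ip{\bar\xi^K-\bar\eta^K}{\lambda^\star}$. Your explicit estimate $\norm{\lambda^K}\le\norm{\lambda^\star}+\sqrt\rho\,\normH{v^0-v^\star}$ simply makes precise the boundedness of $\{\lambda^k\}$ that the paper invokes.
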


\begin{proof}
Use \eqref{eq:restricted-skew} to reverse the sign and arguments of the
left-hand side of \eqref{eq:fundamental}. Discard its nonnegative step term, sum from
$k=0$ to $K-1$, and telescope. Convexity of $\Phi$ and linearity of the
remaining term give \eqref{eq:ergodic-gap}. Summing
\eqref{eq:pred-lambda} gives \eqref{eq:ergodic-feasibility}; boundedness of
$\{\lambda^k\}$ gives the rate. At a solution,
$\xi^\star=\eta^\star$ and $F(w^\star)=(\lambda^\star,0,0)$. Applying
\eqref{eq:ergodic-gap} with $w=w^\star$ yields the upper objective bound,
while $-\lambda^\star\in\partial\Phi(\xi^\star)$ yields the matching lower
bound. Since $\lambda^\star\in\cS^\perp$,
$\ip{\bar\xi^K-\xi^\star}{\lambda^\star}
=\ip{\bar\xi^K-\bar\eta^K}{\lambda^\star}$, which proves
\eqref{eq:ergodic-objective}.
\end{proof}

\subsection{Nonergodic residual rate}

Because ADMM rates depend on the selected performance measure
\citep{DengLaiPengYin2017,ZamaniEtAl2024}, define the squared pointwise residual by
\begin{equation}
 \mathcal R_{k+1}^2
 :=\normH{v^k-v^{k+1}}^2
 =\rho\norm{\eta^k-\eta^{k+1}}^2
  +\rho^{-1}\norm{\lambda^k-\lambda^{k+1}}^2.
 \label{eq:pointwise-residual}
\end{equation}
This quantity is an explicit KKT residual certificate. Indeed, from
\eqref{eq:pred-subgrad},
\begin{equation}
 a^{k+1}:=-\lambda^{k+1}
            +\rho(\eta^k-\eta^{k+1})
 \in\partial\Phi(\xi^{k+1}),
 \label{eq:kkt-subgradient}
\end{equation}
while
\begin{equation}
 \xi^{k+1}-\eta^{k+1}
 =\rho^{-1}(\lambda^{k+1}-\lambda^k).
 \label{eq:point-feasibility}
\end{equation}
Consequently,
\begin{equation}
\mathcal R_{k+1}^2
 =\rho^{-1}\norm{a^{k+1}+\lambda^{k+1}}^2
  +\rho\norm{\xi^{k+1}-\eta^{k+1}}^2.
 \label{eq:kkt-residual-identity}
\end{equation}
The two terms in \eqref{eq:kkt-residual-identity} measure stationarity and
copy feasibility; $\eta^{k+1}\in\cS$ and
$\lambda^{k+1}\in\cS^\perp$ hold exactly.

\begin{lemma}[Firm nonexpansiveness of the state map]\label{lem:firm}
Let $T:\cV\to\cV$ map any state $v$ to the predictor state $\widehat v$
generated by \eqref{eq:predictor}. Then
\begin{equation}
 \normH{T v-Tv'}^2
 \leq\ip{Tv-Tv'}{H(v-v')}
 \qquad\forall v,v'\in\cV.
 \label{eq:firm}
\end{equation}
In particular, $T$ is nonexpansive and
$\mathcal R_{k+2}\leq\mathcal R_{k+1}$.
\end{lemma}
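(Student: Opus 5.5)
The plan is to reuse the prediction inequality of Theorem~\ref{thm:prediction} at two base states and add the results. Fix $v=(\eta,\lambda)$ and $v'=(\eta',\lambda')$ in $\cV$. Let $\widehat w=(\widehat\xi,\widehat\eta,\widehat\lambda)$ and $\widehat w'$ be the corresponding predictors, so that $Tv=\widehat v$ and $Tv'=\widehat v'$. Nothing in Theorem~\ref{thm:prediction} used the iteration index; it only used $v^k\in\cV$. Hence \eqref{eq:prediction-inequality} holds for an arbitrary base state and an arbitrary test point in $\cW$.

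First apply it with base state $v$ and test point $\widehat w'\in\cW$. Then apply it with base state $v'$ and test point $\widehat w\in\cW$. Adding the two inequalities cancels the $\Phi$ terms and gives
\[
 \ip{\widehat w'-\widehat w}{F(\widehat w)-F(\widehat w')}
 \geq \ip{\widehat v'-\widehat v}{Q(v-\widehat v)}
 +\ip{\widehat v-\widehat v'}{Q(v'-\widehat v')}.
\]
The left-hand side vanishes by the restricted skew identity \eqref{eq:restricted-skew}, because both predictors lie in $\cW$. The right-hand side rearranges to $\ip{\widehat v-\widehat v'}{Q\bigl((v'-v)-(\widehat v'-\widehat v)\bigr)}$. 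With $Q=H$, the inequality becomes
\[
 0\geq \ip{\widehat v-\widehat v'}{H(v'-v)}+\normH{\widehat v-\widehat v'}^2,
\]
which is exactly \eqref{eq:firm}.

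The consequences then follow routinely. Cauchy--Schwarz in the $H$-inner product turns \eqref{eq:firm} into $\normH{Tv-Tv'}\leq\normH{v-v'}$, so $T$ is nonexpansive. For the monotonicity of the residual, first note that $v^{k+1}=Tv^k$ by \eqref{eq:correction}. Then
\[
 \mathcal R_{k+2}=\normH{Tv^{k+1}-Tv^k}\leq\normH{v^{k+1}-v^k}=\mathcal R_{k+1}.
\]

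The only delicate point is the sign bookkeeping when the two prediction inequalities are combined. There is also a subtle issue in the skew step. \eqref{eq:restricted-skew} requires both predictor points to lie in $\cW$, which holds because $\widehat\eta=P_{\cS}\widehat\xi$ and $\widehat\lambda\in\cS^\perp$. The latter is true since $\lambda\in\cS^\perp$ and $\widehat\xi-\widehat\eta=P_{\cS^\perp}\widehat\xi$.

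As a fallback, one could argue directly. $\prox_{\Phi/\rho}$ is firmly nonexpansive, and the map $(\eta,\lambda)\mapsto\eta-\rho^{-1}\lambda$ is an isometry from the $H$-norm, up to a factor $\rho$, because $\cS\perp\cS^\perp$. One would then recover $\widehat v$ from $\widehat\xi$ through the orthogonal decomposition. I expect the variational route above to be cleaner.
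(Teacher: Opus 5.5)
Your proposal is correct and is essentially the paper's own proof. Both write the prediction inequality at base state $v$ tested at the predictor from $v'$, and vice versa. Adding the two cancels the $\Phi$ terms, the $F$ terms vanish by the restricted skew identity, and \eqref{eq:firm} follows from $Q=H$. Nonexpansiveness and $\mathcal R_{k+2}\leq\mathcal R_{k+1}$ then follow by Cauchy--Schwarz and $v^{k+1}=Tv^k$, as in the paper. Your extra checks, that the prediction inequality needs only $v\in\cV$ and that $\widehat w\in\cW$ because $\widehat\lambda=\lambda+\rho P_{\cS^\perp}\widehat\xi$, make explicit what the paper leaves implicit.
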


\begin{proof}
Write the prediction inequality \eqref{eq:prediction-inequality} once for
the input $v$ and test it at the predictor generated by $v'$, and once with
$v,v'$ interchanged. Add the two inequalities. The convex terms cancel and,
by \eqref{eq:restricted-skew}, the sum of the $F$ terms is zero. Since
$Q=H$ and $M=I$, the remaining terms give
\[
 0\geq\normH{Tv-Tv'}^2-\ip{Tv-Tv'}{H(v-v')},
\]
which is \eqref{eq:firm}. Cauchy--Schwarz gives nonexpansiveness; applying it
to $v^k,v^{k+1}$ gives
$\mathcal R_{k+2}\leq\mathcal R_{k+1}$.
\end{proof}

Because $HM=Q=H\succ0$ implies $M=I$, the correction accepts the predictor.
Firm nonexpansiveness gives, for the maximal monotone relation
$\mathcal A:=H(T^{-1}-I)$, the resolvent identity
$T=(I+H^{-1}\mathcal A)^{-1}$. Hence the reduced state recursion is an
$H$-metric proximal point algorithm, consistent with proximal interpretations
of distributed splitting, ADMM, and the Chen--Teboulle algorithm
\citep{EcksteinBertsekas1992,ParikhBoyd2014Proximal,CaiGuHeYuan2013,
Becker2019ChenTeboulle}. This does not replace its separated primal updates by
a monolithic subproblem.

\begin{theorem}[Current-iterate residual bounds]\label{thm:pointwise}
For every $v^\star\in\cV^\star$ and $K\geq1$,
\begin{equation}
 \mathcal R_K^2
 \leq \frac{\normH{v^0-v^\star}^2}{K},
 \qquad
 \mathcal R_K
 \leq \frac{\normH{v^0-v^\star}}{\sqrt K}.
 \label{eq:pointwise-bigO}
\end{equation}
Moreover,
\begin{equation}
 \mathcal R_K^2=o(K^{-1}),
 \qquad
 \mathcal R_K=o(K^{-1/2}).
 \label{eq:pointwise-littleo}
\end{equation}
Thus the squared stationarity-plus-feasibility certificate in
\eqref{eq:kkt-residual-identity} has a nonergodic $O(1/K)$ rate, while the
residual norm has a nonergodic $O(K^{-1/2})$ rate.
\end{theorem}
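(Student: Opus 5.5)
The proof combines two facts already in hand: the summability estimate \eqref{eq:summable} of Corollary~\ref{cor:fejer}, and the monotonicity $\mathcal R_{k+2}\leq\mathcal R_{k+1}$ from Lemma~\ref{lem:firm}. By definition \eqref{eq:pointwise-residual}, $\mathcal R_{k+1}^2=\normH{v^k-v^{k+1}}^2$. Hence \eqref{eq:summable} reads
\[
 \sum_{k=1}^{\infty}\mathcal R_k^2\leq\normH{v^0-v^\star}^2
 \qquad\text{for every } v^\star\in\cV^\star .
\]
Lemma~\ref{lem:firm} says the sequence $\{\mathcal R_k\}_{k\geq1}$ is nonincreasing, since $T$ is nonexpansive in $\normH{\cdot}$ and $v^{k+1}=Tv^k$.

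For the $O(1/K)$ bound, fix $K\geq1$. Monotonicity gives $\mathcal R_K^2\leq\mathcal R_k^2$ for every $k=1,\ldots,K$. Summing these $K$ inequalities and applying the summability estimate yields
\[
 K\mathcal R_K^2\leq\sum_{k=1}^{K}\mathcal R_k^2\leq\normH{v^0-v^\star}^2,
\]
which is the first inequality in \eqref{eq:pointwise-bigO}. Taking square roots gives the second. The index shift must be checked: $\mathcal R_1$ corresponds to the term $k=0$ of \eqref{eq:summable}, so the sum indeed starts at the first residual.

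For the little-$o$ statement, I would use the standard argument for a nonincreasing summable sequence. For $K\geq2$, monotonicity gives
\[
 \lfloor K/2\rfloor\,\mathcal R_K^2\leq\sum_{k=\lceil K/2\rceil}^{K}\mathcal R_k^2 .
\]
The right-hand side is a tail of a convergent series, so it tends to zero as $K\to\infty$. Since $\lfloor K/2\rfloor\geq K/3$ for $K\geq2$, it follows that $K\mathcal R_K^2\to0$, i.e.\ $\mathcal R_K^2=o(K^{-1})$. Taking square roots gives $\mathcal R_K=o(K^{-1/2})$.

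The final interpretive sentence of the theorem follows from the identity \eqref{eq:kkt-residual-identity}. That identity equates $\mathcal R_K^2$ with the weighted stationarity-plus-feasibility certificate, built from $a^{K}\in\partial\Phi(\xi^{K})$ and the copy gap $\xi^K-\eta^K$.

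All the substantive work is already contained in Lemma~\ref{lem:firm}, so I expect no real obstacle. The only delicate points are bookkeeping: the indexing convention $\mathcal R_{k+1}\leftrightarrow(v^k,v^{k+1})$, and the fact that the monotonicity in Lemma~\ref{lem:firm} is applied to consecutive states. Both states lie in $\cV$ by Lemma~\ref{lem:invariance}, which is what makes $T$ applicable to them.
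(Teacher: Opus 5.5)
Your proposal is correct and follows the same route as the paper. Summability from Corollary~\ref{cor:fejer} combined with the monotonicity $\mathcal R_{k+2}\leq\mathcal R_{k+1}$ from Lemma~\ref{lem:firm} gives $K\mathcal R_K^2\leq\normH{v^0-v^\star}^2$, and the little-$o$ bound comes from the standard half-interval tail argument (your window $\{\lceil K/2\rceil,\ldots,K\}$ in place of the paper's $\{\lfloor k/2\rfloor,\ldots,k\}$ is an immaterial difference), with your index bookkeeping $\mathcal R_1\leftrightarrow(v^0,v^1)$ handled correctly.
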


\begin{proof}
Corollary \ref{cor:fejer} gives
$\sum_{k=0}^{K-1}\mathcal R_{k+1}^2
\leq\normH{v^0-v^\star}^2$. Lemma \ref{lem:firm} makes the summands
nonincreasing, so
\[
 K\mathcal R_K^2
 \leq\sum_{k=0}^{K-1}\mathcal R_{k+1}^2
 \leq\normH{v^0-v^\star}^2,
\]
which proves \eqref{eq:pointwise-bigO}. A nonnegative, nonincreasing,
summable sequence $a_k$ satisfies $ka_k\to0$: apply monotonicity on the
index interval $\{\lfloor k/2\rfloor,\ldots,k\}$ and use vanishing of the
tail sum. Taking $a_k=\mathcal R_k^2$ proves
\eqref{eq:pointwise-littleo}.
\end{proof}

Residual monotonicity upgrades the usual best-iterate estimate from summability
to the current-iterate little-$o$ bound. The norm rate still takes a square
root; a faster norm rate requires additional regularity.

\section{Conclusion}

The full-space lifting reconstructs the eliminated dual coordinates of block
splitting and identifies its invariant state space. On this space the original
iteration has $Q=H=G$ and $M=I$, is an $H$-metric proximal point method, and
admits Fej\'er decrease, global convergence, and ergodic and nonergodic rate
bounds. Zero initialization satisfies the required invariant condition; without
stronger regularity, the squared current residual is $O(1/K)$ and its norm is
$O(K^{-1/2})$.

\section*{Declarations}

\subsection*{Funding}
Xiaofei Wu acknowledges support from the Visiting Scholar Program of the
Chern Institute of Mathematics and from the Talent Research Start-up Fund of
the School of Mathematics and Statistics, Yunnan University (grant no.
CZ305010126XX).

\subsection*{Competing interests}
The author declares no competing interests.

\subsection*{Data availability}
No datasets were generated or analyzed during the current study.

\bibliography{block_splitting_prediction_correction_note}

\end{document}